\documentclass[conference]{IEEEtran}
\usepackage[dvips]{graphicx}
\usepackage{subfigure}
\usepackage{psfrag}
\usepackage{stmaryrd}
\usepackage{dsfont}
\usepackage{amsmath,amsthm}
\usepackage[latin1]{inputenc}
\usepackage{times}
\usepackage{cite}
\usepackage[active]{srcltx}
\usepackage{color}
\usepackage{url}

\newcommand{\word}[1]{{\boldsymbol #1}}

\newtheorem{theorem}{Theorem}
\newtheorem{lemma}[theorem]{Lemma}

\IEEEoverridecommandlockouts 

\begin{document}

\title{Bilayer LDPC Convolutional Codes for Half-Duplex Relay Channels}

\author{\IEEEauthorblockN{Zhongwei  Si,  Ragnar  Thobaben, and  Mikael
    Skoglund \thanks{This  work was supported in  part  by the  European Community's  Seventh
    Framework  Programme under  grant  agreement  no 257626 (ACROPOLIS) and the Swedish Foundation for Strategic Research.}}
  \IEEEauthorblockA{School of Electrical Engineering and
    ACCESS Linnaeus Center\\
    Royal Institute of Technology (KTH), Stockholm, Sweden} }

\maketitle

\begin{abstract}
In this paper we present regular bilayer LDPC convolutional codes
for half-duplex relay channels. For the binary erasure relay
channel, we prove that the proposed code construction achieves the
capacities for the source-relay link and the source-destination
link provided that the channel conditions are known when designing
the code. Meanwhile, this code enables the highest transmission
rate with decode-and-forward relaying. In addition, its regular
degree distributions can easily be computed from the channel
parameters, which significantly simplifies the code optimization.
Numerical results are provided for both binary erasure channels
(BEC) and AWGN channels. In BECs, we can observe that the gaps
between the decoding thresholds and the Shannon limits are
impressively small. In AWGN channels, the bilayer LDPC
convolutional code clearly outperforms its block code counterpart
in terms of bit error rate.

\end{abstract}

\section{Introduction}

The relay channel was introduced in 1971 when van der Meulen \cite
{Meu71} proposed a channel model consisting of one source, one
relay, and one destination. The relay aids the communication
between the source and the destination so that increased
robustness, higher transmission efficiency, and/or larger coverage
range can be achieved. As smallest but fundamental unit of large
network topologies, the relay channel has been extensively studied
focusing on both theoretical and implementation aspects.

Decode-and-forward (DF) relaying is the most researched protocol
for relay channels. In particular, the design of distributed
channel codes has attracted considerable attention. The concept of
distributed Turbo coding (DTC) was proposed in \cite{ZV03}, which
offered a new fashion of distributed code design. Low-density
parity-check (LDPC) codes were considered for distributed coding
for example in \cite{CBSA07}, \cite{HD07} and \cite{RY07}.
Different approaches were presented to optimize LDPC codes for
given channel conditions. For LDPC block codes, an irregular
degree distribution needs to be derived to match a given channel.
For a variety of channel conditions, extensive re-optimization is
required. This leads to a high complexity for code adaptation and
may not be feasible in practice.

In this paper we propose to use LDPC convolutional codes for
distributed channel coding in relay networks. LDPC convolutional
codes were first proposed in \cite{FZ99} as a time-varying
periodic LDPC code variation. Then the idea was further developed
in, e.g., \cite{PFSLZC08}, \cite{LSCZ10}. Recently, it has been
proven analytically in \cite{KRU10} that the belief-propagation
(BP) decoding threshold of an LDPC convolutional code achieves the
optimal maximum \emph{a posteriori} probability (MAP) threshold of
the corresponding LDPC block code with the same variable and check
degrees. This code in turn approaches the capacity as the node
degrees increase. Furthermore, regular LDPC convolutional codes
allow us to avoid complicated re-optimization of the degree
distributions for varying channel conditions. Meanwhile, LDPC
convolutional codes enable recursive encoding and sliding-window
decoding \cite{LSCZ10}, which dispels the concerns over complexity
and delay. Motivated by the good properties of LDPC convolutional
codes, we consider in this paper the design of bilayer LDPC
convolutional codes for the relay channel. A similar code
construction was proposed in \cite{RUAS10} for the wiretap
channel. A protograph-based bilayer code was proposed in
\cite{NND10} which applies the concept of bilayer-lengthened
codes. In contrast to \cite{NND10} we present bilayer expurgated
codes \cite{RY07} in this paper.

In the following, we will discuss the construction of bilayer LDPC
convolutional codes for given relay channels. We will prove
analytically that the proposed bilayer code is capable of
achieving the highest rate with DF relaying in binary erasure
channels (BEC). Moreover, the regularity of degree distributions
significantly simplifies the code optimization. Numerical results
are provided to verify the theoretical analysis.

\section{Preliminaries}
\label{sec:Preliminary}

In this section, firstly we introduce the transmission model we
use throughout the paper. Then we briefly review the coding
strategy which leads to the highest achievable rate \cite{KSA03}
with DF relaying.  The construction of bilayer codes \cite{RY07}
is described as a practical realization of the coding strategy.

\subsection{System Model}
In this paper, we restrict ourself to the three-node relay channel
which is composed of one source, one relay, and one destination.
The source ($S$) intends to transmit its information to the
destination ($D$) while the relay ($R$) provides assistance.

The system model is shown in Figure~\ref{fig:Transmod}. Due to
practical constraints the relay works in a half-duplex mode, which
means it cannot transmit and receive at the same time or the same
frequency. This implies that the transmission from the source to
the destination is carried out in two phases. In the first phase,
the source broadcasts while the relay and the destination listen.
In the second phase, the relay transmits to the destination while
the source keeps silent. We assume the transmissions on the three
links to be orthogonal.

\begin{figure}[htb]
    \psfrag{B}[][]{\scriptsize $\word{B}$}
    \psfrag{D}[][]{\scriptsize $\hat{\word{B}}$}
    \psfrag{Xr}[][]{\scriptsize $\word{X}\!_R$}
    \psfrag{Xs}[][]{\scriptsize $\word{X}\!_S$}
    \psfrag{Y1}[][]{\scriptsize $\word{Y}\!_{SR}$}
    \psfrag{Y2}[][]{\scriptsize $\word{Y}\!_{SD}$}
    \psfrag{Y3}[][]{\scriptsize $\word{Y}\!_{RD}$}
  \centering
  \includegraphics[width=.45\textwidth]{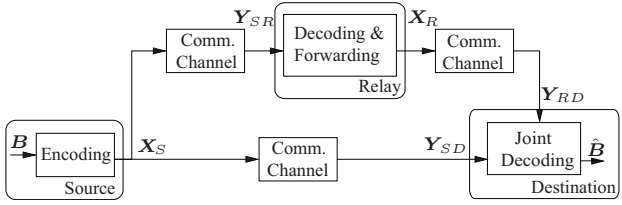}
  \vspace{-3mm}
  \caption{Transmission model.}
  \vspace{-2mm}
  \label{fig:Transmod}
\end{figure}
In the following we use  $X_i$, $i\in\{S,R\}$, to denote  the BPSK
modulated signals which  are  transmitted from  the source and the
relay, and   we  use   $Y_{i,j}$,  $i\in\{S,R\}$, $j\in\{R,D\}$,
for  the channel observations  of the three  links. We use
$C_{ij}=\emph{I}(X_i;Y_{ij})$, $i\in\{S,R\}$, $j\in\{R,D\}$ to
denote the capacity of  each link constrained  to the BPSK
modulation. In this paper we assume that perfect channel-state
information (CSI) is available for code construction.

\subsection{Achievable Rate}

The highest transmission rate using decode-and-forward protocol
for the half-duplex relay channel with orthogonal receive
components is given as \cite{KSA03}
\begin{equation}
\label{AchiRate}
\begin{split}
  R_{DF}= \sup_{\alpha,0\leq \alpha\leq 1} \min (\alpha \emph{I}(X_S;Y_{SR}&),\\
  \alpha  \emph{I}(X_S;Y_{SD})+(1-&\alpha)
  \emph{I}(X_R;Y_{RD}))
\end{split}
\end{equation}
where $\alpha$ is  the fraction  of channel  uses in  the first
phase, and $(1-\alpha)$ is the fraction of channel uses in the
second phase.

To achieve $R_{DF}$, in the first phase the source employs a
capacity-achieving code for the source-relay link which guarantees
successful decoding at the relay. This code may not be decodable
at the destination due to the poorer channel condition on the
source-destination link. Therefore, in the second phase the relay
forwards additional bits to the destination in order to construct
an overall lower rate code which is capacity-achieving for the
source-destination link. A practical implementation of the idea is
presented in the following.

\subsection{Bilayer LDPC Block Codes for Relay Channels}
\label{sec:BlockBiLDPC}

The construction of bilayer LDPC block codes \cite{RY07} is
realized in two steps corresponding to the two transmission
phases.

In the first phase, $K_1$ information bits $\word{B}$ are encoded
by a length-$N_1$ codeword $\word{X}_S$ through a rate-$R_1$ LDPC
code $\mathcal{C}_1$ (i.e., $K_1=N_1 \cdot R_1$) with the check
matrix $\word{H}_S$ and transmitted. At the end of the first
phase, the relay decodes $\mathcal{C}_1$, using the check matrix
$\word{H}_S$, and recovers $\word{X}_S$.

At the destination, additional $K_2$ bits are needed for
successfully decoding $\word{X}_S$:
\begin{equation*}
K_2=N_1(\emph{I}(X_S;Y_{SR})-\emph{I}(X_S;Y_{SD})).
\end{equation*}
Therefore, in the second phase the relay generates $K_2$ new bits
(syndrome, $\word{S}$) using the check matrix $\word{H}_R$. These
$K_2$ syndrome bits are transmitted to the destination via a
channel encoder $\mathcal{C}_2$ of rate $R_2$ using $N_2$ channel
uses, i.e., $K_2=N_2 \cdot R_2$. To simplify the discussion, we
assume these syndrome bits are perfectly known at the destination
after decoding $\mathcal{C}_2$. Then the overall code
$\mathcal{C}$ is described by the stacked check matrix $\word{H}$,
and we have
\begin{equation*}
\word{H}\word{X}_S=\left[\begin{array}{c} \word{H}_S\\\word{H}_R\end{array}\right]\word{X}_S=\left[\begin{array}{c} \word{0}\\\word{S}\end{array}\right].
\end{equation*}
That is, at the destination ($N_1-K_1$) zero check equations and
$K_2$ non-zero check equations need to be satisfied in the
decoding. The Tanner graph of a bilayer code example is plotted in
Figure~\ref{fig:BilayerLDPC}.

\begin{figure}[htb]
\vspace{-2mm}
   \centering
  \includegraphics[width=.25\textwidth]{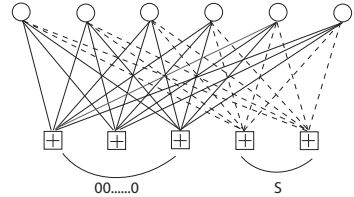}
  \vspace{-2mm}
  \caption{The Tanner graph of a bilayer LDPC code. Circles represent variable nodes, and squares indicate check nodes.
  The solid lines correspond to the edges in $\word{H}_S$, and the dashed lines show the connections determined by $\word{H}_R$.}
  \vspace{-2mm}
  \label{fig:BilayerLDPC}
\end{figure}

To achieve the optimal performance, the design of $\word{H}_S$ and
$\word{H}$ needs to guarantee that $\mathcal{C}_1$ and
$\mathcal{C}$ are simultaneously capacity-achieving for the
source-relay link and the source-destination link respectively.
The authors of \cite{RY07} approached this target by applying
irregular LDPC block codes. Consequently, re-optimization is
required for every given channel, which results in high complexity
and infeasibility. In the next section we will show how this goal
can be achieved by using regular LDPC convolutional codes leading
to significantly reduced optimization overhead.

If the channel codes $\mathcal{C}_1$ and $\mathcal{C}_2$ are both
capacity-achieving, i.e., $R_1=C_{SR}$ and $R_2=C_{RD}$, then the
achievable rate in (\ref{AchiRate}) is maximized by \vspace{-1mm}
\begin{equation}
\label{fraction}
\alpha = \frac{N_1}{N_1+N_2} = \frac{C_{RD}}{C_{RD}+C_{SR}-C_{SD}}.
\end{equation}
Later in this paper we will prove that in BECs $R_{DF}$ can be
achieved by applying bilayer LDPC convolutional codes.

\section{Bilayer LDPC Convolutional Codes}
\label{sec:BilayerLDPCCC}

\subsection{LDPC Convolutional Codes}

A regular $(l,r)$ time-varying binary LDPC convolutional code can
be defined by a syndrome former matrix \cite{LSCZ10}
\begin{equation*}
\word{H}^T=\left[\begin{array}
{c c c c c}
\ddots & & \ddots & &\\
\word{H}_0^T(1) & \ldots & \word{H}_w^T(1+w) &  & \\
  & \ddots & & \ddots & \\
  & & \word{H}_0^T(t) & \ldots & \word{H}_w^T(t+w)\\
  & & \ddots & & \ddots \\
\end{array}\right],
\end{equation*}
where $l$ is the variable degree and $r$ is the check degree. We
assume that at each time instant $t$ ($t=1,2,...,L$) the number of
variable nodes is $M$. Then each submatrix $\word{H}_i^T(t+i)$ is
a $M\times(Ml/r)$ binary matrix. The largest $i$ such that
$\word{H}_i^T(t+i)$ is nonzero for some $t$ is called the syndrome
former memory $w$. The matrix $\word{H}^T$ is sparse.

There are many variations of LDPC convolutional codes in the
literature. In this paper, we denote an LDPC convolutional code by
four parameters $\{l,r,L,w\}$. The memory constraint $w$ can be
any non-negative integer. We assume that each of the $l$ edges of
a variable node at time $t$ uniformly and independently connects
to the check nodes in the time range $[t,...,t+w]$. More
precisely, for each variable node at time $t$, one can define a
\emph{type} $\mathcal{M}_t$\footnote{Index of the variable node is
omitted for the ease of notation.} \cite{KRU10} which is a
$w$-tuple of non-negative integers,
$\mathcal{M}_t=(m_{t,t},...,m_{t,t+j},...,m_{t,t+w})$,
$j\in[0,w]$, and $\sum_{j} m_{t,t+j}=l$. The element $m_{t,t+j}$
indicates that there are $m_{t,t+j}$ edges connecting the
designated variable node at time $t$ and the check nodes at time
$t+j$. For each variable node, $\mathcal{M}_t$ is uniformly and
independently chosen from all possible types. It has been stated
in \cite{KRU10} that the $\{l,r,L,w\}$ code ensemble is capacity
achieving and easier to analyze. However, experimentally it shows
a worse trade-off between rate, threshold and block length.

Another variant, the $\{l,r,L\}$ ensemble, can be considered as a
special case of the more general code ensemble mentioned above.
For this ensemble, the memory length $w$ always equals $l-1$.
Exactly one of the $l$ outgoing edges of each variable node at
time $t$ is connected to one check node at position
$[t,...,t+(l-1)]$, i.e., $m_{t,t+j}\!=\!1$ for all $j\in[0,l-1]$.
We observe through experiments that this type of ensemble provides
good performance with moderate $M$ and $L$ when $l\geq 3$.

In this paper, we use the $\{l,r,L,w\}$ ensemble for theoretical
analysis while employing the $\{l,r,L\}$\! ensemble in
simulations.

\subsection{Bilayer LDPC Convolutional Codes for Relay Channels}

Firstly, we define the structure of a bilayer LDPC convolutional
code. We assume the number of variable nodes to be $N=M \cdot L$.
The connections between the $N$ variable nodes and the check nodes
in the first (second) layer are determined by the ensemble
$\{l_1,r_1,L,w_1\}$ ($\{l_2,r_2,L,w_2\}$). If $w_1=w_2$, we denote
the bilayer code by $\{l_1,l_2,r_1,r_2,L,w\}$. Note that only the
edges belonging to the same layer are connected to one check node.
The structure of the overall check matrix is illustrated in
Figure~\ref{fig:LDPCCBilayer}.

\begin{figure}[htb]
\vspace{-3mm}
\psfrag{D}[][]{$\word{H}^T$}
\psfrag{A}[][]{ $\word{H}_{S}^T$}
\psfrag{B}[][]{$\word{H}_{R}^T$}
\psfrag{E}[][]{ $\ddots$}
   \centering
  \includegraphics[width=.35\textwidth]{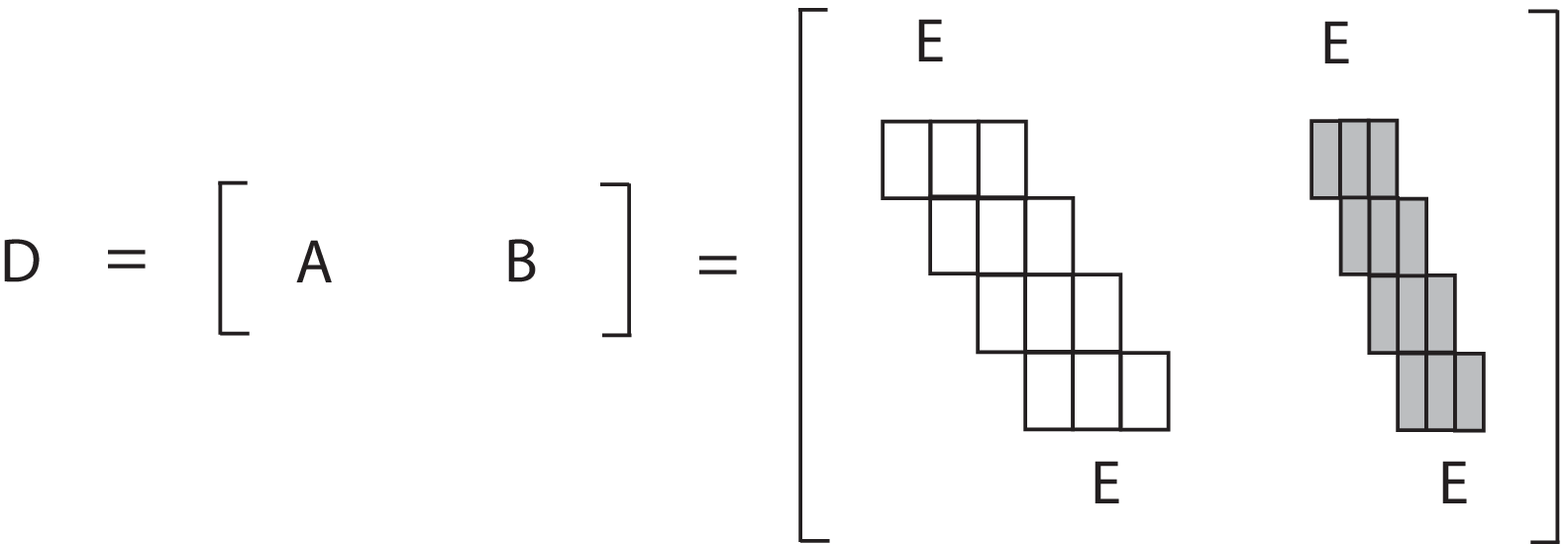}
  \vspace{-1mm}
  \caption{Overall check matrix of a bilayer LDPC convolutional code.
  The white blocks correspond to the non-zero submatrices in the first layer, and the grey blocks are for those submatrices in the second layer.}
  \vspace{-3mm}
  \label{fig:LDPCCBilayer}
\end{figure}

The protocol for transmitting a bilayer LDPC convolutional code
for the relay channel is similar to the strategy we explained in
Section~\ref{sec:BlockBiLDPC}. The information bits from the
source are encoded by the single-layer code $\{l_1,r_1,L,w_1\}$
and broadcasted in the first phase. After successful decoding, the
relay generates the syndrome bits using $\{l_2,r_2,L,w_2\}$. These
syndrome bits are transmitted to the destination under perfect
protection by another channel code in the second phase. The
destination decodes the overall code by considering the zero check
equations in the first layer and the non-zero check equations in
the second layer.

\subsection{Analysis for Binary Erasure Channels}

It has been shown in \cite{KRU10} that the $\{l,r,L,w\}$ ensemble
with infinite $M$ has the following properties in a binary erasure
channel: for the rate of the code $R$
 \vspace{-1mm}
\begin{equation}
\lim_{w\rightarrow\infty}\lim_{L\rightarrow\infty}R(l,r,L,w)=1-\frac{l}{r},
\label{Rate}
\end{equation}
and for the decoding threshold
\begin{equation*}
\lim_{w\rightarrow\infty}\!\lim_{L\rightarrow\infty}\!\epsilon^{B\!P}(l,r,L,w)\!=\!\!\lim_{L\rightarrow\infty}\!\epsilon^{M\!A\!P}(l,r,L,w)\!=\!\epsilon^{M\!A\!P}(l,r),
\end{equation*}
where $\epsilon^{BP}$ and $\epsilon^{MAP}$ are respectively the BP
threshold and the MAP threshold for decoding. If we increase the
degrees of the nodes, its decoding threshold approaches the
Shannon limit $\epsilon_{Sh}=1-R$,
 \vspace{-1mm}
\begin{equation}
\lim_{r\rightarrow\infty}\lim_{w\rightarrow\infty}\lim_{L\rightarrow\infty}\epsilon^{BP}(l,r,L,w)=\epsilon_{Sh}.
\label{Capacity}
\end{equation}

In the following, we will show in Theorem 2 that the bilayer LDPC
convolutional code $\{l_1,l_2,r,r,L,w\}$ achieves the same Shannon
limit as the standard single-layer ensemble $\{l_1+l_2,r,L,w\}$
\cite{RUAS10}. As a preparation for the theorem, we introduce the
following lemma.

\begin{lemma}
If $M$, $L$ and $w$ go to infinity in this order, the density
evolution of a single-layer LDPC convolutional code $\{l,r,L,w\}$
in a binary erasure channel can be written as
\begin{equation*}
p^{(i)}=\epsilon (q^{(i-1)})^{l-1} \hspace{3mm} \mathrm{and} \hspace{3mm} q^{(i)}=1-(1-p^{(i)})^{r-1},
\end{equation*}
where $p^{(i)}$ ($q^{(i)}$) is the erasure probability from a
variable (check) node to a check (variable) node in the $i$-th
iteration, and $\epsilon$ is the erasure probability of the
channel.
\end{lemma}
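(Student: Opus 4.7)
The plan is to derive the full position-dependent density evolution for the $\{l,r,L,w\}$ ensemble and then show that, in the interior of the chain, it collapses to the stated uncoupled form under the prescribed limit order.

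First I would write the DE in its position-dependent form. By construction, each of the $l$ edges of a variable node at time $t$ is routed uniformly and independently to a check at some time in $[t,t+w]$, so by a standard counting argument the $r$ edges incident to an interior check at time $t'$ originate from variables at times in $[t'-w,t']$ with uniform probability $1/(w+1)$. Letting $p_t^{(i)}$ and $q_t^{(i)}$ denote the variable-to-check and check-to-variable erasure probabilities at position $t$, the BEC DE recursions read
\begin{equation*}
p_t^{(i)} = \epsilon\Bigl(\tfrac{1}{w+1}\sum_{j=0}^{w} q_{t+j}^{(i-1)}\Bigr)^{l-1},
\end{equation*}
\begin{equation*}
q_t^{(i)} = 1 - \Bigl(1 - \tfrac{1}{w+1}\sum_{k=0}^{w} p_{t-k}^{(i)}\Bigr)^{r-1}.
\end{equation*}

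I would then apply the three limits in the prescribed order. Taking $M\to\infty$ invokes the concentration result of \cite{KRU10}: the BP message statistics on a random instance of the ensemble concentrate around the deterministic DE trajectory, so it suffices to analyse the recursions above. Taking $L\to\infty$ next makes the boundary region, where check degrees drop below $r$ and the averaging windows are truncated, a vanishing fraction of the code, so only interior positions matter. In the interior the ensemble is translation-invariant under shifts in $t$, and the initial condition $p_t^{(0)}=\epsilon$ shares this invariance; by induction on $i$ every $p_t^{(i)}$ and $q_t^{(i)}$ is therefore independent of $t$. With $p_t^{(i)}\equiv p^{(i)}$ and $q_t^{(i)}\equiv q^{(i)}$, the two length-$(w+1)$ averages collapse to $q^{(i-1)}$ and $p^{(i)}$, yielding the claimed recursions immediately. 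The final $w\to\infty$ limit is trivial for this particular claim, but it is the limit needed downstream in Theorem~2 for the ensemble's BP threshold to reach the MAP threshold of the underlying $(l,r)$ code.

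The main obstacle I would expect is the careful control of boundary effects when commuting the limits: for each fixed iteration $i$, the region of positions whose DE values have been contaminated by the boundary grows at most linearly in $i$ with slope $O(w)$, so the translation-invariance step at iteration $i$ genuinely requires $L\gg iw$ before it applies. Taking $L\to\infty$ before $w\to\infty$ is precisely what secures this uniformity, and together with the $M\to\infty$ concentration result it closes the argument.
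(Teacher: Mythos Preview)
Your argument is correct but follows a different line from the paper's. The paper keeps the random type $\mathcal{M}_t=(m_{t,t},\dots,m_{t,t+w})$ explicit and writes an edge-level recursion with type-dependent exponents,
\[
p_{t,t+j}^{(i)}=\epsilon\prod_{k\neq j}(q_{t+k,t}^{(i-1)})^{m_{t,t+k}}\,(q_{t+j,t}^{(i-1)})^{m_{t,t+j}-1};
\]
there $M\to\infty$ drops the node index, $L\to\infty$ removes the boundary irregularity, and the decisive step is $w\to\infty$, invoked to argue that the $(w{+}1)$-fold spatial mixing forces the messages at different time instants to share the same distribution. You instead pre-average over the edge routing at the outset, writing the window-smoothed DE in the standard KRU form, and then use translation invariance of the interior together with the position-independent initialisation $p_t^{(0)}=\epsilon$ to get position independence directly---so for you $w\to\infty$ really is superfluous for the lemma itself. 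Your route is the now-standard one for spatially coupled ensembles and is a bit cleaner; the paper's stays closer to the type-based ensemble definition and makes explicit why messages along edges with different $m_{t,t+j}$ end up identically distributed. Your closing remark on the boundary-contamination radius $O(iw)$ and the need for $L\to\infty$ before $w\to\infty$ is well taken and is in fact more careful than the paper's treatment of that point.
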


\begin{proof}
In the following we refer to the check nodes connected to a given
variable node as the active check nodes for that variable node. We
use $p_{t,t+j}^{m,(i)}$ to denote the probability that the message
from a given variable node at time $t$ to the $m$-th active check
node at time $t+j$ in decoding iteration $i$ is erased. In the
first iteration, $p_{t,t+j}^{m,(1)}=\epsilon$ for all $m$ and $j$.
We use $q_{t+j,t}^{n,(i)}$ to represent the probability that the
message from the $n$-th active check node at $t+j$ to the given
variable node at $t$ is erased. Then we have
\begin{equation}
\label{Lem1}
p_{t,t+j}^{m,(i)}=\epsilon\prod_{k=0,k\neq j}^{w} \left(\prod_{n=1}^{m_{t,t+k}}q_{t+k,t}^{n,(i-1)}\right) \cdot \prod_{v=1,v\neq m}^{m_{t,t+j}}q_{t+j,t}^{v,(i-1)}.
\end{equation}
If $M\rightarrow\infty$, the messages from different nodes at the
same time instant behave identically \cite{LSCZ10}. Then
(\ref{Lem1}) reduces to
\begin{equation}
\label{Lem2}
p_{t,t+j}^{(i)}=\epsilon\prod_{k=0,k\neq j}^{w} (q_{t+k,t}^{(i-1)})^{m_{t,t+k}} \cdot (q_{t+j,t}^{(i-1)})^{m_{t,t+j}-1}.
\end{equation}

The messages from nodes at different time instants can behave
differently and are usually tracked separately. However, if we
have $L\rightarrow\infty$, the effect of boundaries caused by the
initialization and the termination of the code vanishes. We can
then consider the code asymptotically regular \cite{LFZC09}. The
message updating is averaged over $w+1$ time instants. Therefore,
if $w\rightarrow\infty$, the messages from the nodes at different
time instants have asymptotically identical distribution.
Eventually, (\ref{Lem2}) is simplified to
\vspace{-0.5mm}
\begin{equation*}
p^{(i)}=\epsilon (q^{(i-1)})^{l-1}.
\end{equation*}
\vspace{-5mm}

Similarly, we also obtain \vspace{-1mm}
\begin{equation*}
q^{(i)}=1-(1-p^{(i)})^{r-1}.
\vspace{-5mm}
\end{equation*}
\end{proof}

Now we look at the relation between the bilayer LDPC convolutional
code ensemble $\{l_1,l_2,r,r,L,w\}$ and the standard single-layer
ensemble $\{l_1+l_2,r,L,w\}$.

\begin{theorem}{\cite{RUAS10}}
We denote a bilayer LDPC convolutional code of length $N=M \cdot
L$ by $\{l_1,l_2,r_1,r_2,L,w\}$, where $l_1$ and $l_2$ are
respectively the variable degrees of the two layers, $r_1$, $r_2$
are the check degrees of the two layers, and $w$ is the common
memory constraint. If we assume the two layers take the same check
degree, i.e., $r_1=r_2=r$, then the bilayer LDPC convolutional
code $\{l_1,l_2,r,r,L,w\}$ approaches the same Shannon limit as
the single-layer LDPC convolutional code $\{l_1+l_2,r,L,w\}$.
\end{theorem}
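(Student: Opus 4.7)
The plan is to carry out density evolution directly on the bilayer ensemble and show that, under the symmetry hypothesis $r_1=r_2=r$, the two layers become \emph{indistinguishable} iteration by iteration, so that the bilayer recursion collapses onto the single-layer recursion for $\{l_1+l_2,r,L,w\}$. The Shannon-limit claim then follows immediately from the result of \cite{KRU10} quoted in (\ref{Rate})--(\ref{Capacity}).

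First I would extend Lemma 1 to the bilayer setting. For $k\in\{1,2\}$ let $p_k^{(i)}$ be the asymptotic variable-to-check erasure probability on an edge going into layer $k$ at iteration $i$, and let $q_k^{(i)}$ be the corresponding check-to-variable erasure probability. Running the same $M\to\infty,\,L\to\infty,\,w\to\infty$ limits as in Lemma 1, and using that a variable node of total degree $l_1+l_2$ has $l_1-1$ remaining layer-1 edges and $l_2$ layer-2 edges when sending along a layer-1 edge (and symmetrically for layer 2), the bilayer density evolution becomes
\begin{equation*}
p_1^{(i)}=\epsilon\,(q_1^{(i-1)})^{l_1-1}(q_2^{(i-1)})^{l_2},\qquad p_2^{(i)}=\epsilon\,(q_1^{(i-1)})^{l_1}(q_2^{(i-1)})^{l_2-1},
\end{equation*}
together with $q_k^{(i)}=1-(1-p_k^{(i)})^{r-1}$ for $k=1,2$ since both layers share the common check degree $r$.

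Next I would exploit symmetry. At initialization $p_1^{(1)}=p_2^{(1)}=\epsilon$, and hence $q_1^{(1)}=q_2^{(1)}=1-(1-\epsilon)^{r-1}$. A one-line induction shows that if $q_1^{(i-1)}=q_2^{(i-1)}=:q^{(i-1)}$ then
\begin{equation*}
p_1^{(i)}=p_2^{(i)}=\epsilon\,(q^{(i-1)})^{l_1+l_2-1},
\end{equation*}
and consequently $q_1^{(i)}=q_2^{(i)}=1-(1-p^{(i)})^{r-1}$. Thus the bilayer recursion is indistinguishable from the density evolution of the single-layer $\{l_1+l_2,r,L,w\}$ ensemble given in Lemma 1, so the BP thresholds coincide. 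Combined with the facts that both ensembles have the same design rate $1-(l_1+l_2)/r$ (they have the same number of check constraints per variable node in the $M,L\to\infty$ limit) and that the single-layer threshold approaches $\epsilon_{Sh}=1-R$ by (\ref{Capacity}), this gives the theorem.

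I do not foresee a significant obstacle beyond the bookkeeping in writing down the bilayer analogue of Lemma 1: one must justify that the same $M\to L\to w\to\infty$ ordering still collapses the per-time, per-edge densities to a single scalar pair $(p_k^{(i)},q_k^{(i)})$ when edges of two distinct layers are interleaved at the same variable nodes. That argument is a verbatim repeat of the single-layer proof, since the two layers are realized by independent $\{l_k,r,L,w\}$ ensembles on a common variable-node set and interact only through the shared channel message $\epsilon$. After that, the symmetry collapse is immediate and the theorem follows.
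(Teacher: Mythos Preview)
Your proposal is correct and follows essentially the same route as the paper: write the bilayer density evolution as $p_k^{(i)}=\epsilon (q_1^{(i-1)})^{l_1-\delta_{k1}}(q_2^{(i-1)})^{l_2-\delta_{k2}}$, $q_k^{(i)}=1-(1-p_k^{(i)})^{r-1}$, use the common initialization and $r_1=r_2$ to collapse it to the $\{l_1+l_2,r,L,w\}$ recursion, and then invoke (\ref{Rate})--(\ref{Capacity}). Your treatment is in fact slightly more careful than the paper's, since you make the induction step and the justification of the bilayer analogue of Lemma~1 explicit, whereas the paper simply asserts the bilayer density-evolution equations ``according to Lemma~1.''
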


\begin{proof}
For the completeness of the proof, we repeat the derivation of the
BP decoding threshold which was previously given in \cite{RUAS10}.
According to Lemma 1, we write for the first layer of the bilayer
LDPC convolutional code,
\begin{equation*}
p_1^{(i)}=\epsilon (q_1^{(i-1)})^{l_1-1} (q_2^{(i-1)})^{l_2} \hspace{2mm} \mathrm{and} \hspace{2mm} q_1^{(i)}=1-(1-p_1^{(i)})^{r_1-1}.
\end{equation*}
For the second layer of the code, we have
\begin{equation*}
p_2^{(i)}=\epsilon (q_1^{(i-1)})^{l_1} (q_2^{(i-1)})^{l_2-1} \hspace{2mm} \mathrm{and} \hspace{2mm} q_2^{(i)}=1-(1-p_2^{(i)})^{r_2-1}.
\end{equation*}

Since $r_1=r_2=r$ and $p_1^{(1)}=p_2^{(1)}=\epsilon$, we obtain
from the iterations $p_1^{(i)}=p_2^{(i)}$. Then the recursion can
be written as
\begin{equation*}
p^{(i)}=\epsilon(1-(1-p^{(i-1)})^{r-1})^{l_1+l_2-1}.
\end{equation*}
This indicates that the bilayer LDPC convolutional code has the
same BP threshold as the $\{l_1+l_2,r,L,w\}$ ensemble.

The rate of the bilayer LDPC convolutional code satisfies
\vspace{-1mm}
\begin{equation}
\label{BLDPCCRate}
\lim_{w\rightarrow\infty}\lim_{L\rightarrow\infty}R(l_1,l_2,r,r,L,w)=1-\frac{l_1}{r}-\frac{l_2}{r},
\vspace{-1mm}
\end{equation}
and the decoding threshold achieves the Shannon limit,
\begin{equation}
\label{BLDPCCLimit}
\lim_{r\rightarrow\infty}\lim_{w\rightarrow\infty}\lim_{L\rightarrow\infty}\epsilon^{BP}(l_1,l_2,r,r,L,w)=\frac{l_1}{r}+\frac{l_2}{r}.
\end{equation}

According to (\ref{Rate}) and (\ref{Capacity}), obviously the
single-layer code $\{l_1+l_2,r,L,w\}$ has the same rate as in
(\ref{BLDPCCRate}) and achieves the same limit as in
(\ref{BLDPCCLimit}). Therefore, the theorem is proven.
\end{proof}

For the design of bilayer LDPC convolutional codes in relay
channels, firstly we choose an $\{l_1,r,L,w\}$ ensemble which is
capacity achieving for the source-relay link. Afterwards the relay
generates the syndrome bits according to $\{l_2,r,L,w\}$ and
forwards them to the destination. The overall code structure is
consequently $\{l_1,l_2,r,r,L,w\}$. In the following we will show
this overall code is capacity achieving for the source-destination
link. In addition, it enables the highest achievable rate $R_{DF}$
of the relay channel.

\begin{theorem}
For a binary erasure relay channel, we can find an LDPC
convolutional code $\mathcal{C}_1=\{l_1,r,L,w\}$ achieving the
capacity for the source-relay link and simultaneously its bilayer
extension $\mathcal{C}=\{l_1,l_2,r,r,L,w\}$ achieving the capacity
for the source-destination link. Meanwhile, the above code
construction provides the highest achievable rate with
decode-and-forward relaying as in (\ref{AchiRate}).
\end{theorem}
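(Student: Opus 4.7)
The plan is to combine the capacity-achieving property of single-layer $\{l,r,L,w\}$ ensembles expressed in (\ref{Rate})--(\ref{Capacity}) with the threshold-equivalence in Theorem~2, and then to verify by a direct channel-use count that the end-to-end rate matches $R_{DF}$ in (\ref{AchiRate}). Writing $\epsilon_{SR}=1-C_{SR}$ and $\epsilon_{SD}=1-C_{SD}$ for the erasure probabilities of the two BECs, I aim to pick degree parameters that simultaneously place the Shannon limit of $\mathcal{C}_1$ at $\epsilon_{SR}$ and the Shannon limit of $\mathcal{C}$ at $\epsilon_{SD}$.

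First I would fix the common check degree $r$ and choose integers $l_1,l_2$ by rational approximation so that $l_1/r\to 1-C_{SR}$ and $l_2/r\to C_{SR}-C_{SD}$ as $r\to\infty$; this is possible since $0\le C_{SD}\le C_{SR}\le 1$ ensures both target ratios lie in $[0,1]$. Taking the nested limit $M,L,w,r\to\infty$ in this order and applying (\ref{Rate})--(\ref{Capacity}) to $\mathcal{C}_1=\{l_1,r,L,w\}$ yields rate $R_1\to 1-l_1/r=C_{SR}$ and BP threshold $\epsilon^{BP}(\mathcal{C}_1)\to l_1/r=\epsilon_{SR}$, so $\mathcal{C}_1$ is capacity-achieving on the source-relay link and the relay recovers $\word{X}_S$. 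For the bilayer code $\mathcal{C}=\{l_1,l_2,r,r,L,w\}$, Theorem~2 identifies its BP threshold with that of $\{l_1+l_2,r,L,w\}$, which by (\ref{Capacity}) tends to $(l_1+l_2)/r=\epsilon_{SD}$, while (\ref{BLDPCCRate}) gives rate $1-(l_1+l_2)/r=C_{SD}$. Hence $\mathcal{C}$ is capacity-achieving on the source-destination link and decoding at the destination succeeds.

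It remains to check the overall DF rate. The source transmits $N_1$ symbols in phase~1 carrying $K_1=N_1 C_{SR}$ information bits; the relay then forms $K_2=N_1\cdot l_2/r=N_1(C_{SR}-C_{SD})$ syndrome bits and forwards them in phase~2 via a rate-$C_{RD}$ capacity-achieving code using $N_2=N_1(C_{SR}-C_{SD})/C_{RD}$ symbols. This gives
\begin{equation*}
\alpha=\frac{N_1}{N_1+N_2}=\frac{C_{RD}}{C_{RD}+C_{SR}-C_{SD}},
\end{equation*}
which coincides with (\ref{fraction}). A short computation shows $\alpha C_{SR}=\alpha C_{SD}+(1-\alpha)C_{RD}$ at this $\alpha$; since the first branch of the minimum in (\ref{AchiRate}) is increasing in $\alpha$ and the second is decreasing, equality pinpoints the maximizer, so the end-to-end rate $\alpha C_{SR}$ of the proposed scheme equals $R_{DF}$.

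The hard part is not any single step but the interlocking of the constraints: one must match two distinct capacities with one shared parameter $r$. The crucial enabler is Theorem~2, which decouples the rate and threshold requirements by guaranteeing that the bilayer BP threshold depends on $l_1+l_2$ alone, allowing $l_1$ to be tuned for the source-relay threshold and $l_2$ to be tuned independently to supply exactly the extra syndrome redundancy $C_{SR}-C_{SD}$. Everything else is the bookkeeping of the nested asymptotic limits and elementary algebra.
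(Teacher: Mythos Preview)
Your proposal is correct and follows essentially the same route as the paper: pick $l_1/r=\epsilon_{SR}$ so that $\mathcal{C}_1$ is capacity-achieving on the source--relay link, set $l_2/r=\epsilon_{SD}-\epsilon_{SR}=C_{SR}-C_{SD}$, invoke Theorem~2 to conclude that $\mathcal{C}$ is capacity-achieving on the source--destination link, and then count channel uses to recover the optimal $\alpha$ of (\ref{fraction}). Your write-up adds two small refinements absent from the paper's proof---the rational-approximation remark to ensure integer degrees as $r\to\infty$, and the explicit monotonicity argument showing that the $\alpha$ in (\ref{fraction}) actually attains the supremum in (\ref{AchiRate})---but the underlying argument is the same.
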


\begin{proof}
We assume that the erasure probability for the source-relay link
and the source-destination link are $\epsilon_{SR}$ and
$\epsilon_{SD}$, respectively, and $\epsilon_{SR}<\epsilon_{SD}$.
The corresponding channel capacities for these two links are
\begin{equation*}
C_{SR}=1-\epsilon_{SR},\  C_{SD}=1-\epsilon_{SD}.
\end{equation*}

We use a regular LDPC convolutional code $\{l_1,r,L,w\}$ with
$l_1/r=\epsilon_{SR}$ for the transmission in the first phase.
According to (\ref{Rate}) and (\ref{Capacity}), we have
\vspace{-1mm}
\begin{equation*}
\lim_{w\rightarrow\infty}\lim_{L\rightarrow\infty}R(l_1,r,L,w)=1-\frac{l_1}{r}=1-\epsilon_{SR}
\end{equation*}\vspace{-2mm}
\begin{equation*}
\mathrm{and}\hspace{3mm}\lim_{r\rightarrow\infty}\lim_{w\rightarrow\infty}\lim_{L\rightarrow\infty}\epsilon^{BP}(l_1,r,L,w)=\epsilon_{SR}.
\end{equation*}
Hence, $\mathcal{C}_1$ is capacity achieving, and error-free
decoding can be guaranteed at the relay.

We assume that the number of variable nodes of $\mathcal{C}_1$ is
$N_V$ and the number of check nodes of $\mathcal{C}_1$ is
$N_{C1}$, then
\begin{equation*}
N_{C1}=l_1 N_V/r.
\end{equation*}
The number of additional bits needed at the destination is
\begin{equation*}
N_{C2}=N_V(C_{SR}-C_{SD})=N_V(\epsilon_{SD}-\epsilon_{SR}),
\end{equation*}
and these bits are provided by the syndrome generated at the
relay. At the destination, the total number of check nodes is
\begin{equation*}
N_C=N_{C1}+N_{C2}=N_V(\frac{l_1}{r}+\epsilon_{SD}-\epsilon_{SR})=\epsilon_{SD}N_V.
\end{equation*}
The additional $N_{C2}$ check equations bring in $r N_{C2}$ edges,
and the corresponding variable degree $l_2$ follows as
\begin{equation*}
l_2=r N_{C2}/N_V=r(\epsilon_{SD}-\epsilon_{SR}).
\end{equation*}
From Theorem 2, we have for the source-destination link
\begin{equation*}
\lim_{w\rightarrow\infty}\lim_{L\rightarrow\infty}R(l_1,l_2,r,r,L,w)=1-\frac{l_1+l_2}{r}=1-\epsilon_{SD},
\end{equation*}\vspace{-2mm}
\begin{equation*}
\mathrm{and}\hspace{3mm}\lim_{r\rightarrow\infty}\lim_{w\rightarrow\infty}\lim_{L\rightarrow\infty}\epsilon^{BP}(l_1,l_2,r,r,L,w)=\epsilon_{SD}.
\end{equation*}
Therefore, the overall code $\mathcal{C}$ achieves the capacity of
the source-destination link.

The number of channel uses in the first phase is $N_1=N_V$. In the
second phase, we can use another capacity-achieving LDPC
convolutional code to transmit the $N_{C2}$ syndrome bits to the
destination. Therefore, $N_2=N_{C2}/C_{RD}$ channel uses are
needed. The fraction
\begin{equation*}
\alpha'=\frac{N_1}{N_1+N_2} = \frac{C_{RD}}{C_{RD}+C_{SR}-C_{SD}}
\end{equation*}
equals the one in (\ref{fraction}), which maximizes the achievable
rate.
\end{proof}

From Theorem 3, we can conclude that the proposed regular bilayer
LDPC convolutional codes significantly simplify the code
optimization. Appropriate variable and check node degrees can
easily be computed from the parameters of the channels, and a
complicated optimization of irregular degree distributions as for
example in \cite{RY07} can be avoided.

\section{Numerical Results}
\label{sec:Results}

In this section, we firstly give numerical results for bilayer
LDPC convolutional code ensembles $\{l,r,L\}$ in binary erasure
relay channels. The source broadcasts its information bits with an
$\{l_1\!\!=\!\!3,r\!\!=\!\!10,L\!\!=\!\!100\}$ LDPC convolutional
code. At each time instant, the number of variable nodes is set to
be $M\!=\!2000$. At the relay, different values of $l_2$
($l_2\in\{2,3,4,5\}$) are chosen. Consequently, bilayer LDPC
convolutional codes of different rates are constructed. Note that
rate loss is inevitable for finite $L$ \cite{LSCZ10}. We compare
the decoding thresholds of both the single-layer code and the
bilayer codes with the corresponding Shannon limits. It can be
seen from Figure~\ref{fig:BilayerBECBER} that the gaps in between
are impressively small.

\begin{figure}[htb]
\vspace{-3.5mm}
\centering
  \includegraphics[width=0.83\columnwidth]{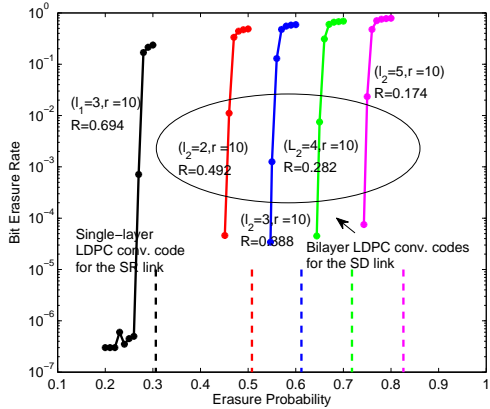}
  \vspace{-3mm}
  \caption{Bit erasure rate of bilayer LDPC convolutional codes with different overall rates in BECs.
  The solid curves show the simulation results, and the dashed lines indicate the Shannon limits.}
  \vspace{-2.5mm}
  \label{fig:BilayerBECBER}
\end{figure}

To evaluate the proposed bilayer LDPC convolutional codes under
more practical conditions, Figure~\ref{fig:BilayerAWGNBER} shows
the bit-error-rate performance for the AWGN channel. For
comparison purpose, we also include a regular bilayer LDPC block
code. For both types of the codes, we set $l_1=3$, $l_2=2$, and
$r=10$, which leads to approximately $R_{SR}=0.7$ and
$R_{SD}=0.5$. In addition, the lengths of both codes are chosen in
the way that the same hardware complexity \cite{PFSLZC08} is
needed. It can be observed that the bilayer LDPC convolutional
code clearly outperforms its block code counterpart.
Signal-to-noise ratio (SNR) gains of $0.5$~dB and $1.3$~dB are
obtained at the relay and at the destination, respectively.

\begin{figure}[htb]
\vspace{-2mm}
\centering
  \includegraphics[width=0.83\columnwidth]{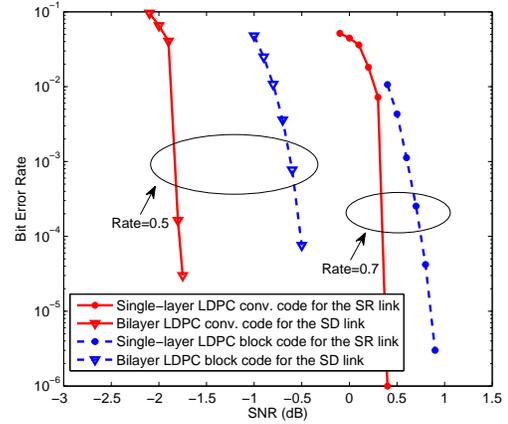}
  \vspace{-3mm}
    \caption{Comparison of bit error rate between a bilayer LDPC convolutional code and a regular bilayer LDPC block code in AWGN channels.}
    \vspace{-3mm}
  \label{fig:BilayerAWGNBER}
\end{figure}

\section{Conclusions}
\label{sec:Conclusion}

In this paper bilayer LDPC convolutional codes were proposed for
three-node relay channels. For a binary erasure relay channel, we
can find a bilayer LDPC convolutional code which is able to
simultaneously achieve the capacities of the source-relay link and
the source-destination link. Meanwhile, this code provides the
highest possible transmission rate with decode-and-forward
relaying. Moreover, the regular code structure significantly
reduces the complexity by avoiding the optimization of irregular
degree distributions. Numerical results were provided in both
binary erasure channels and AWGN channels. In binary erasure
channels, we can observe that the decoding thresholds are very
close to the Shannon limits. In AWGN channels, a significant gain
in terms of SNR is achieved compared with its block code
counterpart.

\small
\bibliographystyle{IEEEbib}
\bibliography{bibnames,references}

\end{document}